\newtheorem{example}{Example}
\newtheorem{theorem}{Theorem}
\newtheorem{lemma}{Lemma}
\newtheorem{definition}{Definition}
\newcommand{\bigtimes}{\text{\LARGE $\times$}}
\newcommand{\NPHard}{$\mathsf{NP}$-hard}
\newcommand{\defeq}{\vcentcolon=}
\newcommand{\strA}{\textsf{A}}
\newcommand{\strB}{\textsf{B}}
\DeclareMathOperator*{\argmin}{argmin}
\title{Signaling in Bayesian Network Congestion Games:\\ the Subtle Power of Symmetry}
\author{
	Matteo Castiglioni\\
	Politecnico di Milano\\
	\texttt{matteo.castiglioni@polimi.it}
	\And
	Andrea Celli\\
	Politecnico di Milano\\
	\texttt{andrea.celli@polimi.it}
	\And
	Alberto Marchesi\\
	Politecnico di Milano\\
	\texttt{alberto.marchesi@polimi.it}
	\And
	Nicola Gatti\\
	Politecnico di Milano\\
	\texttt{nicola.gatti@polimi.it}
}
\begin{document}

\maketitle

\begin{abstract}
Network congestion games are a well-understood model of multi-agent strategic interactions.
Despite their ubiquitous applications, it is not clear whether it is possible to design information structures to ameliorate the overall {experience} of the network users.
We focus on Bayesian games with atomic players, where network vagaries are modeled via a (random) state of nature which determines the costs incurred by the players.
A third-party entity---the {\em sender}---can observe the realized state of the network and exploit this additional information to send a {\em signal} to each player.
A natural question is the following: {\em is it possible for an informed sender to reduce the overall social cost via the strategic provision of information to players who update their beliefs rationally?}
The paper focuses on the problem of computing optimal {\em ex ante} persuasive signaling schemes, showing that {\em symmetry} is a crucial property for its solution.
%
%
%
Indeed, we show that an optimal {\em ex ante} persuasive signaling scheme can be computed in polynomial time when players are symmetric and have affine cost functions.
Moreover, the problem becomes \NPHard~when players are asymmetric, even in non-Bayesian settings.
\end{abstract}

\section{Introduction}\label{sec:intro}

{\em Network congestion games}, where players seek to minimize their own costs selfishly, are a canonical example of a setting where externalities may induce socially inefficient outcomes~\citep{roughgarden2005selfish}.
In real-world problems, the state of the network may be uncertain, and not known to its users ({\em e.g.}, drivers may not be aware of road works and accidents in a road network).
This setting is modeled via {\em Bayesian network congestion games} (BNCGs). 
We investigate whether providing players with partial information about the state of the network may mitigate inefficiencies.

We model this information-structure design problem through the Bayesian persuasion framework by~\cite{kamenica2011bayesian}.
At its core, this framework involves an informed {\em sender} trying to influence the behavior of a set of self-interested players---the {\em receivers}---via the provision of payoff-relevant information.
We focus on the notion of {\em ex ante persuasiveness}, as introduced by~\cite{xu2019tractability} and~\cite{celli2019bayesian}, where players commit to following the sender's recommendations having observed only the information structure.
This assumes credible receivers' commitments, which is reasonable in practice.
In our setting, where signaling schemes are usually implemented as software ({\em e.g.}, real-time traffic apps), it is natural to assume that each player decides to either follow the signaling scheme ({\em i.e.}, adopting the software) or act based on his prior belief.
Moreover, on a general level, the receivers will uphold their {\em ex ante} commitment every time they reason with a long-term horizon where a reputation for credibility positively affects their utility~\citep{rayo2010optimal}.
In some cases, the receivers could also be forced to stick to their {\em ex ante} commitment by contractual agreements or penalties.

\paragraph{Related Works}
~\cite{arnott1991does} and~\cite{acemoglu2018informational} study the impact of information on traffic congestion.
Several recent works focus on {\em non-atomic} games~\citep{das2017reducing,massicot2019public,wu2018value,vasserman2015implementing}.
\cite{bhaskar2016hardness} study the inapproximability of finding optimal {\em ex interim} persuasive signaling schemes in non-atomic games.
\cite{liu2019efficient} focus on atomic games with costs uncertainties and study {\em ex interim} persuasion by placing stringent constraints on the network structure.
To the best of our knowledge, the present work is the first studying {\em ex ante} persuasion in general atomic BNCGs.
Other works study the problem of correlation in non-Bayesian congestion games \citep{christodoulou2005,papadimitriou2008}. 
The closest to our work is that of~\cite{jiang2011}, who provide a polynomial-time algorithm to find an optimal coarse correlated equilibrium ({\em i.e.}, an {\em ex ante} persuasive signaling scheme in the non-Bayesian setting) in simple games with symmetric players selecting a single resource (a.k.a. \emph{singleton} congestion games).

\paragraph{Original Contributions}
We investigate whether it is possible to efficiently compute optimal ({\em i.e.}, minimizing the social cost) {\em ex ante} persuasive signaling schemes in BNCGs.
First, we show that an optimal {\em ex ante} persuasive signaling scheme can be compute in poly-time in symmetric BNCGs with affine costs.
To prove this result, we exploit the ellipsoid algorithm by designing a sophisticated polynomial-time separation oracle based on a suitably defined min-cost flow problem.
Then, we show that {\em symmetry} is a crucial property for efficient signaling by proving that it is \NPHard~to compute an optimal {\em ex ante} persuasive signaling scheme in {\em asymmetric} BNCGs.
Our reduction proves an even stronger hardness result, as it works for non-Bayesian singleton congestion games with affine costs, which is arguably the simplest class of asymmetric congestion games.
%
Furthermore, in such setting, a solution to our problem is an optimal coarse correlated equilibrium and, thus, its computation is \NPHard.




\section{Signaling in Network Congestion Games}

We study atomic network congestion games where edge costs depend on a stochastic state of nature, defined as follows.
%

\paragraph{Network Congestion Game (NCG)}

A \emph{network congestion game}~\citep{fabrikant2004complexity} is defined as a tuple $(N,G,\{c_e\}_{e \in E},\{(s_p,t_p)\}_{p \in N})$, where:
$N \defeq \{1, \ldots, n\}$ denotes the set of players;
$G \defeq (V,E)$ is the directed graph underlying the game, with $V$ being its set of nodes and each $e = (v,v') \in E$ representing a directed edge from $v$ to $v'$;
$\{c_e\}_{e \in E}$ are the edge costs, with each $c_e : \mathbb{N} \to \mathbb{R}_+$ defining the cost of edge $e \in E$ as a function of the number of players traveling through $e$;
finally, $\{(s_p,t_p)\}_{p \in N}$, with $s_p, t_p \in V$, denote the source-destination pairs for all the players.
As usual, we assume that $c_e(0) = 0$ for all $e \in E$.
In an NCG, the set $A_p$ of actions available to a player $p \in N$ is implicitly defined by the graph $G$, the source $s_p$, and the destination $t_p$.
Formally, $A_p$ is the set of all directed paths from $s_p$ to $t_p$ in the graph $G$.
In this work, we use $a_p \in A_p$ to denote a player $p$'s path and we write $e \in a_p$ whenever the path contains the edge $e \in E$.
An action profile $a \in A$, where $A \defeq \bigtimes_{p \in N} A_p$, is a tuple of $s_p$-$t_p$ directed paths $a_p \in A_p$, one per player $p \in N$.
For the ease of notation, given an action profile $a \in A$, we let $f_e^a$ be the congestion of edge $e \in E$ in $a$, \emph{i.e.}, the number of players selecting a path passing thorough $e$ in $a$; formally, $f_e^a \defeq |\{ p \in N \mid e \in a_p  \}|$.
Thus, $c_e(f_e^a)$ denotes the cost of edge $e$ in $a$.
%
%
Finally, the cost incurred by player $p\in N$ in an action profile $a\in A$ is denoted by $c_p(a) \defeq \sum_{e \in a_p} c_e(f_e^a)$.

\paragraph{Bayesian Network Congestion Game (BNCG)}
We define a \emph{Bayesian network congestion game} as a tuple $(N,G,\Theta, \mu, \{c_{e,\theta}\}_{e \in E, \theta \in \Theta},\{(s_p,t_p)\}_{p \in N})$, where, differently from the basic setting, the edge cost functions $c_{e,\theta}: \mathbb{N} \to \mathbb{R}_+$ also depend on a state of nature $\theta$ drawn from a finite set of states $\Theta$.
Moreover, $\mu$ encodes the prior beliefs that the players have over the states of nature, \emph{i.e.}, $\mu \in \textnormal{int}(\Delta_\Theta)$ is a fully-supported probability distribution over the set $\Theta$, with $\mu_\theta$ denoting the prior probability that the state of nature is $\theta \in \Theta$.
All the other components are defined as in non-Bayesian NCGs.
Notice that, in BNCGs, the cost experienced by player $p \in N$ in an action profile $a \in A$ also depends on the state of nature $\theta \in \Theta$, and, thus, it is defined as $c_{p,\theta}(a) \defeq \sum_{e \in a_p} c_{e,\theta}(f_e^a)$.
%
%
A BNCG is \emph{symmetric} if all the players share the same $(s_p, t_p)$ pair, \emph{i.e.}, whenever they all have the same set of actions (paths).
For the ease of notation, in such settings we let $s, t \in V$ be the common source and destination.
%
%
Moreover, we focus on BNCGs with \emph{affine costs}, \emph{i.e.}, for all $e \in E$ and $\theta \in \Theta$, there exist constants $\alpha_{e,\theta}, \beta_{e,\theta} \in \mathbb{R}_+$ such that the edge cost function is $c_{e,\theta}(f_e^a) \defeq \alpha_{e,\theta} f_e^a + \beta_{e,\theta}$.~\footnote{We focus on affine costs since: \emph{(i)} the assumption is reasonable in many applications~\citep{vasserman2015implementing}, and \emph{(ii)} the problem is trivially \NPHard~when generic costs are allowed (see Section~\ref{sec:hard}).}

\paragraph{Signaling in BNCGs}
Suppose that a BNCG is employed to model a road network subject to vagaries.
It is reasonable to assume that third-party entities ({\em e.g.}, the road management company) may have access to the realized state of nature.
We call one such entity {\em the sender}.
We focus on the following natural question: {\em is it possible for an informed sender to mitigate the overall costs through the strategic provision of information to players who update their beliefs rationally?}
The sender can publicly commit to a {\em signaling scheme} which maps the realized state of nature to a signal for each player. 
The sender can exploit general {\em private} signaling schemes, sending different signals to each player through private communication channels.
In this setting, a simple revelation-principle-style argument shows that it is enough to employ players' actions as signals~\citep{arieli2016private,kamenica2011bayesian}.
Therefore, a private signaling scheme is a function $\phi:\Theta\to \Delta_A$ which maps any state of nature to a probability distribution over action profiles (signals), and recommends action $a_p$ to player $p$. 
The probability of recommending an action profile $a \in A$ having observed the state of nature $\theta \in \Theta$ is denoted by $\phi_{\theta,a}$.
Then, it has to hold $\sum_{a\in A} \phi_{\theta,a}=1$, for each $\theta\in\Theta$.
A signaling scheme is {\em persuasive} if following recommendations is an equilibrium of the underlying {\em Bayesian game}~\citep{bergemann2016bayes,bergemann2016information}.
We focus on the notion of {\em ex ante persuasiveness} as defined by~\cite{xu2019tractability}~and~\cite{celli2019bayesian}.
\begin{definition}
	A signaling scheme $\phi:\Theta\to\Delta_A$ is {\em ex ante persuasive} if, for each $p\in N$ and $a_p\in A_p$, it holds:
	\[
	\sum_{\theta\in\Theta}\mu_\theta\hspace{-.2cm}\sum_{a'=(a'_p,a_{-p})\in A} \hspace{-.2cm} \phi_{\theta,a'} \Big(c_{p,\theta}(a_p,a_{-p})-c_{p,\theta}(a')\Big)\geq 0.
	\]
\end{definition}
Then, a {\em coarse correlated equilibrium} (CCE)~\citep{moulin1978strategically} may be seen as an {\em ex ante} persuasive signaling scheme when $|\Theta|=1$.
Finally, a sender's {\em optimal ex ante} persuasive signaling scheme $\phi^\ast$ is such that it minimizes the {\em expected social cost} of the solution, {\em i.e.}:
\[
\phi^\ast\in \argmin_{\phi}\sum_{\theta\in\Theta}\mu_\theta \sum_{a\in A}\phi_{\theta,a}\sum_{p\in N} c_{p,\theta}(a).
\]

The following example illustrates the interaction flow between the sender and the players (receivers).
\begin{example}\label{ex:persuasion}
	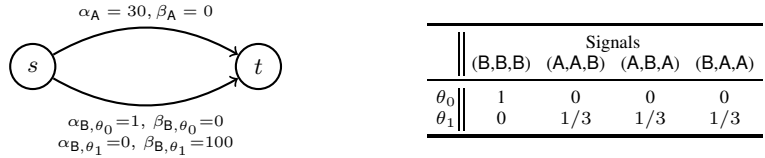
\begin{figure}[h]
	\centering
	\hspace{-1cm}
	\begin{minipage}{.2\textwidth}
	\small
			\begin{tikzpicture}
	  			\tikzstyle{node}=[circle,thick,draw=black,minimum size=6mm]
				\tikzstyle{transition}=[rectangle,thick,draw=black!75,
				fill=black!20,minimum size=4mm]
				\node [node] (s) at (0,0) {$s$};
				\node [node] (t) at (3,0) {$t$};
				\draw[->] (s) edge[thick,black,bend left=30] node[yshift=2mm] {\tiny{$\alpha_\strA=30,\beta_\strA=0$}} (t);
				\draw[->] (s) edge[thick,black,bend right=30] node[yshift=-4mm]{{$\substack{\alpha_{\strB, \theta_0} =1, \,\, \beta_{\strB, \theta_0}=0\\ \alpha_{\strB, \theta_1}=0, \,\, \beta_{\strB, \theta_1}=100}$}}(t);
			\end{tikzpicture}
	\end{minipage}
	\hspace{2cm}
	\renewcommand{\arraystretch}{1}\setlength{\tabcolsep}{1pt}
	\begin{minipage}{.2\textwidth}
		\scriptsize
		\begin{tabular}{lr||lcr lcr lcr lcr}
			\toprule
			&&&& \multicolumn{8}{c}{Signals}&\\
			&&& (\strB,\strB,\strB) &&& (\strA,\strA,\strB) &&& (\strA,\strB,\strA) &&& (\strB,\strA,\strA)\\
			\midrule
			&$\theta_0$ && 1 &&& 0 &&& 0 &&& 0 &\\
			&$\theta_1$ && 0 &&& $1/3$ &&& $1/3$ &&& $1/3$ &\\
			\bottomrule
		\end{tabular}
	\end{minipage}
	\caption{{\em Left}: BNCG for Example~\ref{ex:persuasion}. {\em Right}: An {\em ex ante} persuasive signaling scheme for the case with $n=3$. The table displays only those $a\in A$ such that $\phi_{\theta,a}>0$ for some $\theta\in\Theta = \{\theta_0, \theta_1\}$.}
	\label{fig:ex1}
	\end{figure}

Figure~\ref{fig:ex1}~(Left) describes a simple BNCG modeling the road network between the Tokyo Haneda airport (node $s$), and Yokohama (node $t$).
It is late at night and three lone researchers have to reach the IJCAI venue.
They are following navigation instructions from the same application, whose provider (the sender) has access to the current state of the roads (called \emph{\strA}~and \emph{\strB}, respectively).
Roads costs (i.e., travel times) are depicted in Figure~\ref{fig:ex1}~(Left).
In normal conditions (state $\theta_0$), road \emph{\strB}~is extremely fast ($\alpha_\emph{\strB}=1$ and $\beta_\emph{\strB}=0$).
However, it requires frequent road works for maintenance (state $\theta_1$), which increase the travel time.
Moreover, it holds $\mu_{\theta_0}=\mu_{\theta_1}=1/2$.
The interaction between the sender and the three players goes as follows: 
(i) the sender commits to a signaling scheme $\phi$;
(ii) the players observe $\phi$ and decide whether to adhere to the navigation system or not;
(iii) the sender observes the realized state of nature and exploits this knowledge to compute recommendations.
Figure~\ref{fig:ex1}~(Right) describes an {\em ex ante} persuasive signaling scheme.
In this case, when the state of nature is $\theta_1$, one of the players is randomly selected to take road \emph{\strB}, even if it is undergoing maintenance.
In expectation, following the sender's recommendations is strictly better than congesting road \emph{\strA}.  
\end{example}

A simple variation of Example~\ref{ex:persuasion} is enough to show that the introduction of signaling allows the sender to reach solutions with arbitrarily better expected social cost than what can be achieved via the optimal Bayes-Nash equilibrium in absence of signaling.
Specifically, consider the BNCG in Figure~\ref{fig:ex1}~(Left) with the following modifications:
$n=1$, $\beta$ coefficients always equal to zero, $\alpha_{\strA,\theta_0}=\infty$, $\alpha_{\strA,\theta_1}=0$, $\alpha_{\strB,\theta_0}=0$, and $\alpha_{\strB,\theta_1}=\infty$.
Without  signaling, the optimal choice yields an expected social cost of $\infty$.
However, a perfectly informative signal allows the players to avoid any cost.

\section{The Power of Symmetry}\label{sec:poly}

We design a polynomial-time algorithm to compute an optimal \emph{ex ante} persuasive signaling scheme in symmetric BNCGs with affine cost functions.
Our algorithm exploits the ellipsoid method.
We first formulate the problem as an LP (Problem~\eqref{LP:1}) with polynomially many constraints and exponentially many variables.
Then, we show how to find an optimal solution to the LP in polynomial time by applying the ellipsoid algorithm to its dual (Problem~\eqref{LP:2}), which features polynomially many variables and exponentially many constraints.
This calls for a polynomial-time separation oracle for Problem~\eqref{LP:2}, which is not readily available since the problem has an exponential number of constraints.
We prove that, in our setting, a polynomial-time separation oracle can be implemented by solving a suitably defined min-cost flow problem. 
The proof of this result crucially relies on the symmetric nature of the problem and the assumption that the costs are affine functions of the edge congestion.

The following lemma shows how to formulate the problem as an LP.~\footnote{LPs analogous to Problem~\eqref{LP:1} and Problem~\eqref{LP:2} can also be derived for the asymmetric setting. However, the separation problem for the latter is solvable in poly-time only in the symmetric case.}
%
%
For the ease of presentation, we use $I_{\{e \notin a_p\}}$ to denote the indicator function for the event $e \notin a_p$, \emph{i.e.}, it holds $I_{\{e \notin a_p\}} = 1$ if $e \notin a_p$, while $I_{\{e \notin a_p\}}= 0$ otherwise.

\begin{lemma}
	Given a symmetric BNCG, an optimal \emph{ex ante} persuasive signaling scheme $\phi$ can be found with the LP:
	\begin{subequations}\label{LP:1}
		\begin{align}
			\min_{\phi \geq 0, x} & \,\,\, \sum_{\theta \in \Theta} \mu_\theta \sum_{a \in A} \phi_{\theta,a}  \sum_{p \in N} c_{p,\theta}(a) \label{obj:1} \\
			&\hspace{-3mm} \sum_{\theta \in \Theta} \mu_\theta  \sum_{a \in A}  c_{p,\theta}(a) \phi_{\theta,a}  \le  x_{p,s} & \forall p \in N \label{c:1}\\
			&\hspace{-3mm}x_{p,v}\le \sum_{\theta \in \Theta} \mu_\theta \sum_{a \in A} c_{e,\theta}\left(f_e^a+I_{\{e \notin a_p \}} \right) \phi_{\theta,a} + x_{p,v'} &\forall p \in N,\forall e=(v,v') \in E \label{c:2}\\
			&\hspace{-3mm}x_{p,t}=0 & \forall p \in N  \label{c:3}\\
			&\hspace{-3mm}\sum_{a \in A} \phi_{\theta,a}=1 &\forall \theta \in \Theta \label{c:4}
		\end{align}
	\end{subequations}
\end{lemma}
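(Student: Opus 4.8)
The plan is to show that Problem~\eqref{LP:1}, once projected onto the $\phi$ variables, has feasible region exactly equal to the set of \emph{ex ante} persuasive signaling schemes, and that its objective~\eqref{obj:1} is verbatim the expected social cost minimized by $\phi^\ast$. Constraint~\eqref{c:4} and the objective require no argument: \eqref{c:4} together with $\phi\ge 0$ merely asserts that each $\phi(\cdot\mid\theta)$ is a probability distribution over $A$, and \eqref{obj:1} is the definition of expected social cost. Hence the entire content is to prove that, for each player $p$, the existence of potentials $\{x_{p,v}\}_{v\in V}$ satisfying \eqref{c:1}--\eqref{c:3} is equivalent to the \emph{ex ante} persuasiveness constraint for $p$ in the Definition. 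I would prove this equivalence player by player, since the constraints decouple across $p$.

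First I would rewrite the persuasiveness constraint in a form amenable to a shortest-path treatment. Fix $p$ and a candidate deviation $a_p\in A_p$. When the recommended profile is $a'=(a'_p,a_{-p})$ and $p$ unilaterally switches to $a_p$, the congestion seen by $p$ on an edge $e\in a_p$ changes in a controlled way: it stays $f_e^{a'}$ if $e$ already belonged to $a'_p$, and becomes $f_e^{a'}+1$ otherwise, i.e.\ it equals $f_e^{a'}+I_{\{e\notin a'_p\}}$ (decreases of congestion on edges that leave $a'_p$ are irrelevant, as only $e\in a_p$ enter $p$'s cost). Thus $c_{p,\theta}(a_p,a_{-p})=\sum_{e\in a_p}c_{e,\theta}\big(f_e^{a'}+I_{\{e\notin a'_p\}}\big)$, and taking expectation over $\theta$ and over the recommended profile and swapping the finite sums yields the key decomposition
\[
\sum_{\theta\in\Theta}\mu_\theta\sum_{a'\in A}\phi_{\theta,a'}\,c_{p,\theta}(a_p,a_{-p})=\sum_{e\in a_p}\,\underbrace{\sum_{\theta\in\Theta}\mu_\theta\sum_{a'\in A}c_{e,\theta}\big(f_e^{a'}+I_{\{e\notin a'_p\}}\big)\phi_{\theta,a'}}_{=:\,d_{e,p}}.
\]
Writing $L_p\defeq\sum_{\theta\in\Theta}\mu_\theta\sum_{a\in A}c_{p,\theta}(a)\phi_{\theta,a}$ for the expected on-path cost of $p$, the persuasiveness constraint becomes $L_p\le\sum_{e\in a_p}d_{e,p}$ for every $a_p\in A_p$, i.e.\ $L_p\le\min_{a_p\in A_p}\sum_{e\in a_p}d_{e,p}$. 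The right-hand side is exactly the length of a shortest $s$--$t$ path in $G$ under the nonnegative edge lengths $d_{e,p}$, where symmetry lets me use the common source $s$ and destination $t$.

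The crux is to encode this shortest-path condition through the potentials $x_{p,\cdot}$, for which I would invoke the standard dual (node-potential) characterization of shortest paths. Note that $d_{e,p}$ is precisely the first term on the right-hand side of \eqref{c:2}, so \eqref{c:2}--\eqref{c:3} read $x_{p,t}=0$ and $x_{p,v}\le d_{e,p}+x_{p,v'}$ for every $e=(v,v')\in E$. For the forward direction, given a persuasive $\phi$ I set $x_{p,v}$ to the shortest-path distance from $v$ to $t$ under the lengths $d_{e,p}$: then \eqref{c:3} holds, \eqref{c:2} is the triangle inequality, and $x_{p,s}$ equals the shortest $s$--$t$ distance, which is at least $L_p$, giving \eqref{c:1}. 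For the converse, given any $x$ feasible for \eqref{c:1}--\eqref{c:3} I telescope \eqref{c:2} along an arbitrary $s$--$t$ path $a_p$ and use \eqref{c:3} to obtain $x_{p,s}\le\sum_{e\in a_p}d_{e,p}$; minimizing over $a_p$ and combining with \eqref{c:1} yields $L_p\le x_{p,s}\le\min_{a_p}\sum_{e\in a_p}d_{e,p}$, which is persuasiveness. Ranging over all $p$, the two feasible regions coincide, and since the objectives agree the two problems share the same optimum.

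The main obstacle I anticipate is purely the bookkeeping of the first step: pinning down the congestion shift $f_e^{a'}+I_{\{e\notin a'_p\}}$ correctly (in particular, that only edges $e\in a_p$ contribute to $p$'s deviation cost), and then verifying that the coefficient $d_{e,p}$ is a genuine linear function of $\phi$, so that Problem~\eqref{LP:1} is indeed a linear program. Everything afterwards is the textbook potential characterization of shortest paths. It is worth stressing that neither affine costs nor symmetry are logically needed for the formulation itself---symmetry only fixes the common pair $(s,t)$ appearing in \eqref{c:1} and \eqref{c:3}---both becoming essential only later, when implementing the polynomial-time separation oracle for the dual.
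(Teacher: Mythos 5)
Your proof is correct and takes essentially the same route as the paper's: the objective and \eqref{c:4} are definitional, and \eqref{c:1}--\eqref{c:3} encode \emph{ex ante} persuasiveness by making $x_{p,s}$ a bound on the cost of player $p$'s best deviating path, using the congestion shift $f_e^{a'}+I_{\{e\notin a'_p\}}$ and the shortest-path characterization under the edge lengths $d_{e,p}$. If anything, your two-directional argument (setting $x_{p,v}$ to the $v$-to-$t$ distances and checking the triangle inequality in one direction, telescoping \eqref{c:2} along an arbitrary path in the other) is more precise than the paper's terse claim that \eqref{c:2}--\eqref{c:3} make $x_{p,v}$ \emph{equal} to the minimum $v$-to-$t$ cost: feasibility only forces $x_{p,v}$ to be at most that minimum, which is exactly what your telescoping step establishes and all that correctness of the LP requires.
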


\begin{proof}
	Clearly, Objective~\eqref{obj:1} is equivalent to minimizing the social cost, while Constraints~\eqref{c:4} imply that $\phi$ is well formed.
	Constraints~\eqref{c:1} enforce \emph{ex ante} persuasiveness for every player $p \in N$: the expression on the left-hand side represents player $p$'s expected cost, while $x_{p,s}$ is the cost of her best deviation (\emph{i.e.}, a cost-minimizing path given $\mu$ and $\phi$).
	This is ensured by Constraints~\eqref{c:2}~and~\eqref{c:3}.
	In particular, for every player $p \in N$ and node $v \in V \setminus \{t\}$, the former guarantee that $x_{p,v}$ is the minimum cost of a path from $v$ to $t$.
	This is shown by noticing that (given that $x_{p,t}=0$) such cost can be inductively defined as follows:
	$$
		\min_{\substack{v' \in V: \\e=(v,v') \in E}} \sum_{\theta \in \Theta} \mu_\theta \sum_{a \in A} c_{e,\theta} \left( f_e^a +I_{\{e \notin a_p \}} \right) \phi_{\theta,a} + x_{p,v'},
	$$ 
	where $f_e^a +I_{\{e \notin a_p\}}$ accounts for the fact that the congestion of edge $e$ must be incremented by one if player $p$ does not select a path containing $e$ in the action profile $a$.
\end{proof}

\begin{lemma}
	The dual of Problem~\eqref{LP:1} reads as follows:%
	\begin{subequations}\label{LP:2}
		\begin{align}
		\max_y & \,\,\, \sum_{\theta \in \Theta}  y_\theta \\
		& \hspace{-3mm} \mu_\theta \hspace{-1mm} \left( \sum_{p \in P} c_{p,\theta}(a) y_p - \hspace{-1mm}\sum_{p \in P} \sum_{e \in E} c_{e,\theta} \left( f_e^a+I_{\{e \notin a_p\}} \right) y_{p,e} \right) \hspace{-1mm} + y_\theta \le  \mu_\theta \sum_{p \in N} c_{p,\theta}(a) & \forall \theta \in \Theta, \forall a \in A \label{d:1}\\
		&\hspace{-3mm} \sum_{{v' \in V: e=(v,v') \in E}} y_{p,e} - \sum_{{v' \in V: e=(v',v) \in E}} y_{p,e} = 0 &  \forall p \in N, \forall v \in V \setminus \{s, t\}  \label{d:2} \\
		&\hspace{-3mm} \sum_{{v \in V: e=(s,v) \in E}} y_{p,e}  - \sum_{p \in N} y_p = 0 & \forall p \in N \label{d:3}\\
		& \hspace{-3mm}\sum_{p \in N} y_{p,t}- \sum_{{v \in V: e=(v,t) \in E}} y_{p,e} = 0 & \forall p \in N \label{d:4}\\
		&\hspace{-3mm} y_p \le 0 &\forall p \in N \\
		& \hspace{-3mm}y_{p,e} \le 0 & \forall p \in N, \forall e \in E.
		\end{align}
	\end{subequations}
\end{lemma}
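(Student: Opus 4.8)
The plan is to derive Problem~\eqref{LP:2} directly as the Lagrangian dual of Problem~\eqref{LP:1}, since the statement is purely a matter of mechanically applying the rules of LP duality. First I would fix the primal form: the objective~\eqref{obj:1} is a minimization, the decision variables are the $\phi_{\theta,a}$, which are sign-constrained by $\phi\ge 0$, together with the node potentials $x_{p,v}$, which are free. I would associate one family of dual variables with each family of primal constraints: a variable $y_p$ with the persuasiveness inequalities~\eqref{c:1}, a variable $y_{p,e}$ with the path-cost inequalities~\eqref{c:2}, a variable $y_{p,t}$ with the boundary equalities~\eqref{c:3}, and a variable $y_\theta$ with the normalization equalities~\eqref{c:4}.

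Next I would read off the sign conventions from the standard duality table for a minimization primal. Because~\eqref{c:1} and~\eqref{c:2} are $\le$ inequalities, their multipliers must satisfy $y_p\le 0$ and $y_{p,e}\le 0$, which yields the last two lines of Problem~\eqref{LP:2}; the equality constraints~\eqref{c:3} and~\eqref{c:4} contribute free multipliers. The dual objective then collects the right-hand sides of the primal constraints weighted by their multipliers, and since the only nonzero right-hand side is the $1$ appearing in~\eqref{c:4}, the dual objective reduces to $\sum_{\theta\in\Theta} y_\theta$.

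The core of the derivation is to produce one dual constraint per primal variable by collecting that variable's coefficient across all primal constraints. For each pair $(\theta,a)$, the nonnegative variable $\phi_{\theta,a}$ yields a $\le$ dual constraint: gathering its coefficient $\mu_\theta c_{p,\theta}(a)$ from~\eqref{c:1}, its coefficient $-\mu_\theta c_{e,\theta}(f_e^a + I_{\{e\notin a_p\}})$ from~\eqref{c:2}, and its unit coefficient from~\eqref{c:4}, and bounding the sum by the objective coefficient $\mu_\theta\sum_{p\in N} c_{p,\theta}(a)$, reproduces Constraint~\eqref{d:1} after factoring out $\mu_\theta$. For each free variable $x_{p,v}$ I would instead obtain an equality dual constraint with zero right-hand side, since $x$ does not enter the objective.

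The step requiring the most care---and the one that gives the dual its combinatorial flavor---is tracking how each potential $x_{p,v}$ enters~\eqref{c:2}: for a fixed node $v$ it appears with coefficient $+1$ in the constraint of every edge leaving $v$ and with coefficient $-1$ in the constraint of every edge entering $v$. For an internal node $v\in V\setminus\{s,t\}$ this produces exactly the flow-conservation equality~\eqref{d:2}; at $v=s$ the additional $-1$ contributed by~\eqref{c:1} gives the source condition~\eqref{d:3}, and at $v=t$ the coefficient from the boundary equality~\eqref{c:3} gives the sink condition~\eqref{d:4}. Together these identify the $y_{p,e}$ as an $s$--$t$ flow for each player $p$, which is precisely the structure later exploited to cast the separation problem as a min-cost flow. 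I do not anticipate a genuine obstacle beyond careful bookkeeping; the only delicate points are the consistency of the inequality signs and the correct accounting of incoming versus outgoing edges at each node.
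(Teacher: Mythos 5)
Your proposal is correct and follows exactly the paper's approach: the paper's proof is a one-line appeal to LP duality with the same assignment of dual variables ($y_p$, $y_{p,e}$, $y_{p,t}$, $y_\theta$) to Constraints~\eqref{c:1}--\eqref{c:4}, and your derivation simply spells out the bookkeeping (sign conventions, per-variable dual constraints, and the flow-conservation structure arising from the node potentials) that the paper leaves implicit.
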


\begin{proof}
	It directly follows from duality, by letting $y_p$ (for $p \in N$), $y_{p,e}$ (for $p \in N$ and $e \in E$), $y_{p,t}$ (for $p \in N$), and $y_\theta$ (for $\theta \in \Theta$) be the dual variables associated to, respectively, Constraints~\eqref{c:1},~\eqref{c:2},~\eqref{c:3},~and~\eqref{c:4}.
\end{proof}

Since $|A|$ is exponential in the size of the game, Problem~\eqref{LP:1} features exponentially many variables, while its number of constraints is polynomial.
Conversely, Problem~\eqref{LP:2} has polynomially many variables and exponentially many constraints, which enables the use of the ellipsoid algorithm to find an optimal solution to Problem~\eqref{LP:2} in polynomial time.~\footnote{For additional details on how the ellipsoid algorithm can be adopted to solve optimization problems see~\citep{grotschel1981ellipsoid}.}
This requires a polynomial-time separation oracle for Problem~\eqref{LP:2}, \emph{i.e.}, a procedure that, given a vector $y$ of dual variables, it either establishes that $y$ is feasible for Problem~\eqref{LP:2} or, if not, it outputs an hyperplane separating $y$ from the feasible region.
In the following, we focus on a particular type of separation oracles: those generating violated constraints.
%

Given that Problem~\eqref{LP:2} has an exponential number of constraints, a polynomial-time separation oracle is not readily available.
It turns out that, in our setting, we can design one by leveraging the symmetry of the players and the fact that the cost functions are affine, as described in the following.

First, we prove that Problem~\eqref{LP:2} always admits an optimal player-symmetric solution, \emph{i.e.}, a vector $y$ such that, for each pair of players $p, q \in N$, it holds $y_p = y_q$, $y_{p,e}=y_{q,e}$ for all $e \in E$, and $y_{p,t}=y_{q,t}$.
This result allows us to restrict the attention to player-symmetric vectors $y$.

\begin{lemma}\label{lem:player_symm}
	Problem~\eqref{LP:2} always admits an optimal player-symmetric solution.
\end{lemma}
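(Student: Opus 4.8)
The plan is to prove Lemma~\ref{lem:player_symm} by a symmetrization (averaging) argument: starting from \emph{any} optimal solution $y$ to Problem~\eqref{LP:2}, I would construct a player-symmetric solution $\bar y$ that is feasible and attains the same objective value $\sum_{\theta} y_\theta$. The natural candidate is the average over all permutations of the players. Concretely, for a permutation $\pi$ of $N$ let $y^\pi$ be the solution obtained by relabeling the player-indexed variables, i.e.\ $y^\pi_p \defeq y_{\pi(p)}$, $y^\pi_{p,e} \defeq y_{\pi(p),e}$, and $y^\pi_{p,t} \defeq y_{\pi(p),t}$, while leaving the $y_\theta$ unchanged. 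Then define $\bar y \defeq \frac{1}{n!}\sum_{\pi} y^\pi$, which by construction satisfies $\bar y_p = \bar y_q$, $\bar y_{p,e}=\bar y_{q,e}$, and $\bar y_{p,t}=\bar y_{q,t}$ for all $p,q\in N$, as required.

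The first step is to observe that symmetry of the BNCG means all players share the same source $s$, destination $t$, and action set, so relabeling the players is an automorphism of the combinatorial structure of the problem. This is where the symmetry assumption does the essential work. The flow-conservation Constraints~\eqref{d:2}--\eqref{d:4} are stated per player and involve only that player's $y_{p,e}$ and $y_{p,t}$ variables together with the common $y_p$ (and the aggregate $\sum_{p} y_p$ in \eqref{d:3}, which is permutation-invariant), so each $y^\pi$ satisfies them whenever $y$ does. The sign constraints $y_p\le 0$ and $y_{p,e}\le 0$ are likewise preserved under relabeling. The objective $\sum_\theta y_\theta$ is untouched by $\pi$, so every $y^\pi$ is optimal, and by linearity the convex combination $\bar y$ has the same objective value and still satisfies all the linear equality and sign constraints.

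The step requiring genuine care, and the one I expect to be the main obstacle, is verifying that the convex combination $\bar y$ satisfies Constraints~\eqref{d:1}, which are indexed by $(\theta,a)$ and are \emph{not} symmetric in the players on their face, since a fixed action profile $a=(a_1,\dots,a_n)$ treats players differently. The key idea is that Constraint~\eqref{d:1} for profile $a$ under solution $y^\pi$ is exactly Constraint~\eqref{d:1} for the permuted profile $a^{\pi}$ (the profile whose $p$-th component is $a_{\pi(p)}$) under the original $y$: because the game is symmetric, $f_e^{a}=f_e^{a^\pi}$ (congestion depends only on the multiset of chosen paths), and $\sum_{p} c_{p,\theta}(a)$ as well as the two player-sums in \eqref{d:1} are reindexed consistently by $\pi$. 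Since $a^\pi$ ranges over all of $A$ as $a$ does, each $y^\pi$ satisfies \emph{every} Constraint~\eqref{d:1}, and then for any fixed $(\theta,a)$ the constraint is linear in $y$, so averaging the $n!$ valid inequalities yields that $\bar y$ satisfies it too. Establishing this reindexing identity cleanly — tracking how $f_e^a$, the $I_{\{e\notin a_p\}}$ terms, and the per-player cost terms transform under $\pi$ — is the technical heart of the argument; once it is in place, feasibility and optimality of $\bar y$ follow immediately from convexity.
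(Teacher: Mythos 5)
Your proposal is correct and follows essentially the same approach as the paper: a symmetrization argument whose heart is the reindexing identity that Constraint~\eqref{d:1} for profile $a$ under a player-relabeled solution coincides with Constraint~\eqref{d:1} for the correspondingly permuted profile under the original solution, which is valid precisely because symmetry makes permuted profiles legitimate members of $A$ with unchanged congestions and social cost. The only (cosmetic) difference is that you average the $n!$ relabeled feasible solutions and invoke convexity, whereas the paper directly averages the variables over players and verifies feasibility by summing the constraints over the $n$ cyclic shifts of $a$; both constructions yield the identical symmetric vector $\tilde y_p = \frac{1}{n}\sum_{q\in N} y_q$.
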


\begin{proof}
	Given any optimal solution $y$ to Problem~\eqref{LP:2}, we can always recover, in polynomial time, a player-symmetric optimal solution $\tilde{y}$.
	Specifically, for every $p \in N$, let $\tilde{y}_p= \frac{\sum_{p \in N} y_p}{n}$, $\tilde{y}_{p,e}=\frac{\sum_{p\in N} y_{p,e}}{n}$ for all $e \in E$, and $\tilde{y}_{p,t}=\frac{\sum_{p\in N} y_{p,t}}{n}$, while $\tilde{y}_\theta=y_\theta$ for every $\theta \in \Theta$.
	First, notice that $y$ and $\tilde{y}$ provide the same objective value, as $\tilde{y}_\theta=y_\theta$ for all $\theta \in \Theta$.
	Thus, we only need to prove that $\tilde{y}$ satisfies all the constraints of Problem~\eqref{LP:2}.
	For $a \in A$ and $i \in [n]$, let us denote with $\pi_i(a)$ an action profile $a' \in A$ such that $a_p'=a_{((p+i)\mod n)}$, \emph{i.e.}, a permutation of $a$ in which each player $p \in N$ takes on the role of player $(p+i)\mod n$.
	Moreover, let $\pi(a) = \bigcup_{i \in [n]} \pi_i(a)$.
	Constraints~\eqref{d:1} are satisfied by $\tilde{y}$, since, for every $\theta \in \Theta$ and $a \in A$, it holds:
	\begin{align*}
		&\mu_\theta \left( \sum_{p \in N} c_{p,\theta}(a) \tilde{y}_{p} - \sum_{p \in P}\sum_{e \in E} c_{e,\theta}\left(f_e^a+I_{\{e \notin a_p\}} \right) \tilde{y}_{p,e} \right) + \tilde{y}_{\theta} = \\ &= \frac{1}{n} \sum_{a' \in \pi(a)} \mu_\theta \left( \sum_{p \in P} c_{p,\theta}(a') y_p -  \sum_{p \in P}\sum_{e \in E} c_{e,\theta} \left(f_e^{a'}+I_{\{e \notin a_p'\} } \right) y_{p,e} \right) +y_{\theta}  \le \\
		& \le  \frac{1}{n} \sum_{a' \in \pi(a)} \mu_\theta \sum_{p \in N} c_{p,\theta}(a') = \mu_\theta \sum_{p \in N} c_{p,\theta}(a').
	\end{align*}
	Similar arguments show that $\tilde{y}$ satisfies all the other constraints, concluding the proof.
\end{proof}

Notice that any polynomial-time separation oracle for Problem~\eqref{LP:2} can explicitly check whether each member of the polynomially many Constraints~\eqref{d:2},~\eqref{d:3},~and~\eqref{d:4} is satisfied for the given $y$.
%
Thus, we focus on the separation problem restricted to the exponentially many Constraints~\eqref{d:1}, which, using Lemma~~\ref{lem:player_symm}, can be formulated as stated in the following.

\begin{lemma}
	Given a player-symmetric $y$, solving the separation problem for Constraints~\eqref{d:1} amounts to finding $\theta \in \Theta$ and $a \in A$ that are optimal for the following problem:
	\begin{align}
		&\min_{{\theta \in \Theta, a \in A}}  \mu_\theta \left( (1- \bar y) \sum_{p \in N}  c_{p,\theta}(a) -  \sum_{p \in N} \sum_{e \in E} c_{e,\theta}\left(f_e^a+I_{\{e \notin a_p \}} \right) \bar y_e \right) - y_\theta, \label{Sep}
	\end{align}
	where we let $\bar y = y_p$ and $\bar y_e = y_{p,e}$ for all $e \in E$.
\end{lemma}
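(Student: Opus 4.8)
The plan is to recast the separation problem for the exponentially many Constraints~\eqref{d:1} as the single optimization~\eqref{Sep}. Recall that a separation oracle must, given $y$, either certify that every Constraint~\eqref{d:1} holds or exhibit one that is violated. The constraint indexed by a pair $(\theta,a)$ is violated precisely when its left-hand side strictly exceeds its right-hand side; moving the right-hand term $\mu_\theta \sum_{p \in N} c_{p,\theta}(a)$ across, this is equivalent to the quantity
\[
\mu_\theta \Big( \sum_{p\in N} c_{p,\theta}(a)\,(y_p - 1) - \sum_{p\in N}\sum_{e\in E} c_{e,\theta}\big(f_e^a + I_{\{e \notin a_p\}}\big)\, y_{p,e}\Big) + y_\theta
\]
being strictly positive. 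So the task of the oracle is to decide whether the maximum of this expression over all $(\theta,a)$ is positive, and, if so, to return a maximizing pair as the desired violated constraint.

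The first step I would take is to invoke Lemma~\ref{lem:player_symm} to restrict attention to player-symmetric $y$, setting $y_p = \bar y$ and $y_{p,e} = \bar y_e$ for all $p \in N$ and $e \in E$. Substituting these common values collapses the per-player sums, so the violation quantity depends only on $\bar y$ and $\{\bar y_e\}_{e \in E}$. A routine sign flip then turns the maximization of the violation into the minimization displayed in~\eqref{Sep}: maximizing the violation over $(\theta,a)$ is the same as minimizing its negation. The conceptual point I would state carefully is the resulting equivalence — the optimal value of~\eqref{Sep} is \emph{negative} if and only if some Constraint~\eqref{d:1} is violated, in which case any optimal pair $(\theta^\star,a^\star)$ identifies that violated constraint and hence a separating hyperplane, whereas a nonnegative optimum certifies that $y$ satisfies all of Constraints~\eqref{d:1}. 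Combined with the explicit, termwise check of the polynomially many Constraints~\eqref{d:2}--\eqref{d:4}, this delivers a complete separation oracle and reduces it to solving~\eqref{Sep}.

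Within this lemma the reduction is essentially bookkeeping, so the only delicate point is keeping the sign conventions consistent: since $\bar y \le 0$ and $\bar y_e \le 0$, one has $(1-\bar y)\ge 1$ and $-\bar y_e \ge 0$, and I would track these explicitly to ensure that the coefficients of the edge-cost terms in~\eqref{Sep} land with the right signs. This is not merely cosmetic — it is exactly the configuration that will yield nonnegative arc costs once the inner minimization over the exponentially large set $A$ (for each fixed $\theta$) is recast as a min-cost flow, which is the genuinely hard step carried out in the subsequent result rather than here. The present lemma's role is thus to expose the precise objective~\eqref{Sep} whose structure, thanks to player-symmetry and affine costs, makes that flow reformulation possible.
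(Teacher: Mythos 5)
Your overall strategy is exactly the intended one: the paper supplies no separate proof for this lemma and treats it as immediate bookkeeping from Constraints~\eqref{d:1} --- a constraint indexed by $(\theta,a)$ is violated iff its left-hand side exceeds its right-hand side, player-symmetry collapses the per-player dual variables to $\bar y$ and $\bar y_e$, and separation becomes minimizing the slack over $(\theta,a)$ and checking the sign of the optimum. Your statement of that equivalence (negative optimum iff violated constraint, with the minimizer giving the cutting plane) is correct and more explicit than anything in the paper.

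However, the one step you wave through --- ``a routine sign flip then turns the maximization of the violation into the minimization displayed in~\eqref{Sep}'' --- does not hold literally, and writing the negation out would have exposed it. Negating your (correct) violation quantity and imposing player-symmetry gives
\[
\mu_\theta \left( (1-\bar y) \sum_{p \in N} c_{p,\theta}(a) + \sum_{p \in N}\sum_{e \in E} c_{e,\theta}\left(f_e^a+I_{\{e \notin a_p\}}\right) \bar y_e \right) - y_\theta ,
\]
in which the $\bar y_e$-term enters with a \emph{plus} sign, whereas Objective~\eqref{Sep} carries a \emph{minus} sign on that term. Since $\bar y_e \le 0$, the two objectives differ by $2\mu_\theta \sum_{p\in N}\sum_{e \in E} c_{e,\theta}\left(f_e^a+I_{\{e \notin a_p\}}\right)\bar y_e$, a quantity depending on $(\theta,a)$; they can therefore have different minimizers and differently signed optima, and only the plus-sign version is the slack of~\eqref{d:1} as printed (with the minus-sign version, the oracle could certify feasibility while a constraint of~\eqref{d:1} is in fact violated). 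Re-deriving the dual of Problem~\eqref{LP:1} confirms that \eqref{d:1} is stated correctly, so the discrepancy sits in~\eqref{Sep} itself (and is propagated consistently into Objective~\eqref{f:0}); a careful proof of this lemma must surface that mismatch rather than assert the expressions coincide. Relatedly, your closing remark that these sign conventions are ``exactly the configuration that will yield nonnegative arc costs'' in the flow reduction is off target twice over: with the correct plus sign the $\bar y_e$ terms contribute nonpositively, and what Lemma~\ref{lem:min_cost_flow} actually requires is not nonnegativity of the arc costs but convexity of the per-edge cost in the congestion, which follows from $(1-\bar y)\alpha_{e,\theta} \ge 0$ alone.
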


Next, we show how Problem~\eqref{Sep} can be equivalently formulated avoiding the minimization over the exponentially-sized set $A$.
Intuitively, we rely on the fact that, for a fixed $\theta \in \Theta$, we can exploit the symmetry of the players to equivalently represent action profiles $a \in A$ as integer vectors $q$ of edge congestions $q_e \in [n]$, for all $e \in E$.

\begin{lemma}\label{lem:separation}
	Problem~\eqref{Sep} can be formulated as $\min_{\theta \in \Theta} \chi(\theta)$, where $\chi(\theta)$ is the optimal value of the following problem:
	 \begin{subequations}\label{Minflow}
		\begin{align}
		\min_{q \in \mathbb{Z}_+^{|E|}} & \,\,\,  (1-\bar{y})\sum_{e \in E} \alpha_{e,\theta} q_e^2+\beta_{e,\theta} q_e - \sum_{e \in E} \bar{y}_e \Big( n \alpha_{e,\theta} q_e + (n-q_e) \alpha_{e,\theta} + n \beta_{e,\theta} \Big) \label{f:0} \\
		& \hspace{-3mm}\sum_{v \in V: e=(s,v) \in E} q_e=n \label{f:1} \\
		& \hspace{-3mm}\sum_{v \in V: e=(v,t) \in E} q_e=n \label{f:2}\\
		& \hspace{-3mm}\sum_{\substack{v' \in V: \\e=(v',v) \in E}} q_e = \sum_{\substack{v' \in V: \\e=(v,v') \in E}} q_e \hspace{2cm} \forall v \in V \setminus \{s,t\} \label{f:3} 
		\end{align}
	\end{subequations}
\end{lemma}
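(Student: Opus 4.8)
The plan is to fix a state $\theta \in \Theta$ and argue that the inner minimization over $a \in A$ in Problem~\eqref{Sep} equals the optimum $\chi(\theta)$ of the flow Problem~\eqref{Minflow}; since $\mu_\theta$ and $y_\theta$ do not depend on $a$, the outer $\min_{\theta \in \Theta}$ then carries over directly. The guiding change of variables is $q_e \defeq f_e^a$, the vector of edge congestions induced by $a$. Because all players share the same source $s$ and destination $t$, a symmetric action profile is determined up to irrelevant relabeling by how many players traverse each edge, so it suffices to (i) rewrite the $a$-dependent part of the objective purely in terms of $q$, matching Objective~\eqref{f:0}, and (ii) show that the congestion vectors realizable by some $a \in A$ are exactly the integral points of the polytope defined by Constraints~\eqref{f:1}--\eqref{f:3}.

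For step (i) I would perform two algebraic simplifications, both using affinity of the costs. For the social-cost term, grouping the per-player sum by edges gives $\sum_{p \in N} c_{p,\theta}(a) = \sum_{e \in E} f_e^a\, c_{e,\theta}(f_e^a) = \sum_{e \in E} \alpha_{e,\theta}(f_e^a)^2 + \beta_{e,\theta} f_e^a$, which under $q_e = f_e^a$ is exactly the first summand of~\eqref{f:0}. For the deviation term I would fix an edge $e$ and split the players into the $f_e^a$ who use $e$ (for whom $I_{\{e \notin a_p\}} = 0$) and the $n - f_e^a$ who do not (for whom $I_{\{e \notin a_p\}} = 1$). Affinity makes $c_{e,\theta}(f_e^a + 1) = c_{e,\theta}(f_e^a) + \alpha_{e,\theta}$, so $\sum_{p \in N} c_{e,\theta}(f_e^a + I_{\{e \notin a_p\}}) = n\,\alpha_{e,\theta} q_e + (n - q_e)\alpha_{e,\theta} + n\,\beta_{e,\theta}$; crucially the quadratic contributions cancel, yielding exactly the second summand of~\eqref{f:0}. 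These two identities show that for every $a \in A$ the bracketed $\theta$-expression of~\eqref{Sep} coincides with~\eqref{f:0} evaluated at $q = f^a$.

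For step (ii) the easy inclusion is that $f^a$ is an integral $s$-$t$ flow of value $n$ (each of the $n$ players routes one unit along an $s$-$t$ path), hence satisfies Constraints~\eqref{f:1}--\eqref{f:3}. The converse---that every \emph{optimal} feasible $q$ is realizable by some action profile---is the main obstacle, since flow decomposition only guarantees that an integral $s$-$t$ flow of value $n$ splits into $n$ path-flows \emph{plus possibly some cycle-flows}, and cycles correspond to no player's path. I would resolve this via a monotonicity argument keyed to the dual sign structure: using $\bar y = y_p \le 0$ and $\bar y_e = y_{p,e} \le 0$ from Problem~\eqref{LP:2} together with $\alpha_{e,\theta}, \beta_{e,\theta} \ge 0$, the derivative of~\eqref{f:0} with respect to each $q_e$ equals $(1-\bar y)(2\alpha_{e,\theta} q_e + \beta_{e,\theta}) - \bar y_e\,\alpha_{e,\theta}(n-1)$, which is non-negative for every $q_e \ge 0$; hence~\eqref{f:0} is non-decreasing in each coordinate. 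Deleting any cycle-flow from a feasible $q$ only lowers some $q_e$ while preserving flow conservation and the value $n$, so it never increases the objective. Consequently some optimal $q$ is cycle-free and therefore decomposes into exactly $n$ simple $s$-$t$ paths, i.e. equals $f^a$ for some $a \in A$.

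Combining the two steps gives, for each fixed $\theta$, that $\chi(\theta)$ equals the minimum over $a \in A$ of the bracketed $a$-dependent part of Objective~\eqref{Sep}; reinstating the $a$-independent scalar $\mu_\theta$ and shift $-y_\theta$ per state before minimizing over $\theta \in \Theta$ then recovers Problem~\eqref{Sep}, which drives the separation oracle for Constraints~\eqref{d:1}. I expect the algebraic cancellation of step (i) to be routine once the player-splitting on each edge is set up, and the cycle-elimination of step (ii)---legitimized precisely by the non-positivity of the dual variables and the non-negativity of $\alpha_{e,\theta},\beta_{e,\theta}$---to be the genuinely load-bearing part of the proof.
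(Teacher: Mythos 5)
Your proof is correct and follows essentially the same route as the paper's: fix $\theta$, pass from action profiles to congestion vectors $q_e = f_e^a$ by symmetry, and use affinity to rewrite the objective as~\eqref{f:0}; your step (i) algebra (splitting each edge's deviation term into the $q_e$ users and $n-q_e$ non-users, with the quadratic contributions cancelling) is exactly the paper's computation, and your handling of the $a$-independent factor $\mu_\theta$ and shift $-y_\theta$ is the intended reading of the lemma statement. The one place you go genuinely beyond the paper is step (ii): the paper's proof simply asserts that Constraints~\eqref{f:1}--\eqref{f:3} ``ensure that $q$ is well defined,'' leaving implicit the fact that an optimal integral flow of value $n$ might a priori contain cycle components, which correspond to no action profile $a \in A$. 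Your monotonicity argument---the partial derivative $(1-\bar y)(2\alpha_{e,\theta} q_e + \beta_{e,\theta}) - \bar y_e \,\alpha_{e,\theta}(n-1)$ is nonnegative because $\bar y, \bar y_e \le 0$ and $\alpha_{e,\theta}, \beta_{e,\theta} \ge 0$, so deleting cycle-flows never increases the objective---is precisely the justification that assertion needs, and it is also what implicitly makes the subsequent reduction in Lemma~\ref{lem:min_cost_flow} sound, since the same monotonicity gives nonnegative incremental costs $\delta(e_i)$ and hence an acyclic optimal min-cost flow. In short: same approach, with a welcome patch of a realizability gap that the paper glosses over.
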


\begin{proof}
	First, given a state $\theta \in \Theta$, Problem~\eqref{Sep} reduces to computing $\chi(\theta) \defeq \min_{a \in A} (1- \bar y) \sum_{p \in N} c_{p,\theta}(a) - \sum_{p \in N}\sum_{e \in E}  c_{e,\theta} \left( f_e^a+I_{\{e \notin a_p\}} \right) \bar y_e $, where the function to be minimized only depends on the number of players selecting each edge $e \in E$ in $a$, rather than the identity of the players who are choosing $e$ (since they are symmetric).
	Letting $q(e) \in [n]$ be the congestion level of edge $e \in E$ and using $c_{e,\theta} = \alpha_{e,\theta} q_e + \beta_{e,\theta} $ (affine costs), it holds
	$\sum_{p \in N} c_{p,\theta}(a)=\sum_{e \in E} \alpha_{e,\theta} q_e^2+\beta_{e,\theta} q_e$, and, for every $e \in E$, $\sum_{p \in N} c_{e,\theta} \left(f_e^a+I_{\{e \notin a_p\}} \right)= n \alpha_{e,\theta}q_e + (n-q_e) a_{e,\theta} + n \beta_{e,\theta}$.
	This gives Objective~\eqref{f:0}.
	Moreover, Constraints~\eqref{f:1},~\eqref{f:2},~and~\eqref{f:3} ensure that $q$ is well defined.
\end{proof}

Let us remark that computing an optimal integer solution to Problem~\eqref{Minflow} is necessary in order to (possibly) find a violated constraint for the given $y$; otherwise, we would not be able to easily recover an action profile $a \in A$ from a vector $q$.

Now, we show that an optimal integer solution to Problem~\eqref{Minflow} can be found in polynomial time by reducing it to an instance of integer min-cost flow problem.
Intuitively, it is sufficient to consider a modified version of the original graph $G$ in which each edge $e \in E$ is replaced with $n$ parallel edges with unit capacity and increasing unit costs.
This is possible given that the Objective~\eqref{f:0} is a convex function of $q$, which is guaranteed by the assumption that the costs are affine. 

\begin{lemma}\label{lem:min_cost_flow}
	An optimal integer solution to Problem~\eqref{Minflow} can be found in polynomial time by solving a suitably defined instance of integer min-cost flow problem.
\end{lemma}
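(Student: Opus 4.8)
The plan is to exploit the separability of Objective~\eqref{f:0} across edges together with its convexity in each $q_e$, reducing Problem~\eqref{Minflow} to a standard integer min-cost flow instance on a modified graph. First I would observe that, after expanding the term $(n-q_e)\alpha_{e,\theta}$, the objective decomposes (up to an additive constant independent of $q$) as $\sum_{e \in E} g_e(q_e)$, where each per-edge cost $g_e(q_e) = (1-\bar{y})\alpha_{e,\theta} q_e^2 + \big[(1-\bar{y})\beta_{e,\theta} - (n-1)\bar{y}_e\alpha_{e,\theta}\big] q_e$ is a univariate function of the single variable $q_e$. Constraints~\eqref{f:1}--\eqref{f:3} are precisely the conservation constraints of an integer $s$-$t$ flow of value $n$, with $q_e$ playing the role of the flow on edge $e$; in particular they force $q_e \le n$ on every edge.

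The key structural fact is that each $g_e$ is convex in $q_e$. Its leading coefficient is $(1-\bar{y})\alpha_{e,\theta}$, which is non-negative because $\alpha_{e,\theta} \ge 0$ by the affine-cost assumption and $\bar{y} = y_p \le 0$ by the sign constraints of Problem~\eqref{LP:2}, so that $1 - \bar{y} \ge 1$. This convexity is the crux of the reduction and, I expect, the only genuinely delicate point: it is what guarantees that the standard parallel-edge gadget faithfully encodes the quadratic per-edge cost, and it is exactly where the hypotheses of affine costs and of player symmetry (which produced the single scalars $\bar y, \bar y_e$) enter.

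Concretely, I would build a graph $G'$ from $G$ by replacing each edge $e$ with $n$ parallel arcs of unit capacity, assigning to the $k$-th such arc (for $k = 1, \ldots, n$) the marginal cost $g_e(k) - g_e(k-1)$. Since $g_e$ is convex these marginal costs are non-decreasing in $k$, so any min-cost integer flow routing $q_e$ units across the $n$ copies of $e$ uses the cheapest copies first and incurs total cost $\sum_{k=1}^{q_e}\big(g_e(k)-g_e(k-1)\big) = g_e(q_e) - g_e(0) = g_e(q_e)$, using $g_e(0)=0$. Summing over $e$ recovers Objective~\eqref{f:0} up to the same fixed additive constant separated out above. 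Thus an integer $s$-$t$ flow of value $n$ in $G'$ of minimum cost yields, via its aggregate flow on each bundle of parallel arcs, an optimal integer $q$ for Problem~\eqref{Minflow}, and conversely.

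Finally I would invoke the fact that integer min-cost flow is solvable in polynomial time: the node-arc incidence matrix is totally unimodular, so the LP relaxation already admits an integral optimum, and combinatorial algorithms run in time polynomial in the number of arcs. Since $G'$ has $O(n|E|)$ arcs, the whole construction and solve is polynomial in the input size. One can additionally note that all marginal costs are in fact non-negative here---$g_e(k)-g_e(k-1) = (1-\bar{y})\big(\alpha_{e,\theta}(2k-1)+\beta_{e,\theta}\big) - (n-1)\bar{y}_e\alpha_{e,\theta} \ge 0$, again because $1-\bar y \ge 1$, $\alpha_{e,\theta},\beta_{e,\theta}\ge 0$, and $\bar y_e \le 0$---so even algorithms requiring non-negative arc costs apply directly.
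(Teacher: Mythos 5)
Your proposal is correct and follows essentially the same route as the paper's proof: per-edge convex decomposition of Objective~\eqref{f:0}, replacement of each edge by $n$ unit-capacity parallel arcs priced at marginal costs $g_e(k)-g_e(k-1)$, convexity (from $\bar y \le 0$, $\alpha_{e,\theta}\ge 0$) ensuring cheapest-copies-first so that $q$ can be recovered, and integrality of the LP relaxation of min-cost flow for polynomial-time solvability. Your treatment is in fact slightly more careful than the paper's, since you explicitly separate out the $q$-independent additive constant and verify non-negativity of the arc costs.
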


\begin{proof}
	First, notice that Objective~\eqref{f:0} is a sum edge costs, in which the cost of each edge $e \in E$ is a convex function of the edge congestion $q_e$, as the only quadratic term is $(1-\bar y) a_{e,\theta} q_e^2$, where the multiplying coefficient is always positive, given $\bar y \leq 0$ and $\alpha_{e,\theta} \geq 0$.
	This allows us to formulate Problem~\eqref{Minflow} as an instance of integer min-cost flow problem.
	We build a new graph where each $e \in E$ is replaced with $n$ parallel edges, say $e_i$ for $i \in [n]$.
	For $e \in E$ and $i \in [n]$, let us define $g(e,i) \defeq (1-\bar{y}) \left( \alpha_{e,\theta} i^2+\beta_e i \right) -\bar{y}_e \left( n \alpha_{e,\theta} i + (n-i) \alpha_{e,\theta} + n \beta_{e,\theta} \right)$.
	%
	%
	Each (new) edge $e_i$ has unit capacity and a per-unit cost equal to $\delta(e_i) \defeq g(e,i) - g(e,i-1)$.
	Clearly, finding an integer min-cost flow is equivalent to minimizing Objective~\eqref{f:0}.
	Notice that, since the original edge costs are convex, it holds $\delta(e_i) \geq \delta(e_j)$ for all $j < i \in  [n]$.
	Thus, an edge $e_i$ is used (\emph{i.e.}, it carries a unit of flow) only if all the edges $e_j$, for $j < i \in [n]$, are already used.
	This allows us to recover an integer vector $q$ from a solution to the min-cost flow problem.
	Finally, let us recall that we can find an optimal solution to the integer min-cost flow problem in polynomial time by solving its LP relaxation.
\end{proof}

The last lemma allows us to prove our main result:

\begin{theorem}
	Given a symmetric BNCG, an optimal \emph{ex-ante} persuasive signaling scheme can be computed in poly-time.
\end{theorem}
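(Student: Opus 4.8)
The plan is to compute an optimal \emph{ex ante} persuasive signaling scheme by solving the primal LP~\eqref{LP:1}, whose optimal solutions are exactly such schemes, but to do so indirectly through its dual~\eqref{LP:2} together with the ellipsoid method, and then to recover a polynomially-sized primal solution. First I would note that~\eqref{LP:2} has polynomially many variables and exponentially many constraints, so by the equivalence between separation and optimization~\citep{grotschel1981ellipsoid} the task reduces to exhibiting a polynomial-time separation oracle. By Lemma~\ref{lem:player_symm}, it suffices to optimize~\eqref{LP:2} over the player-symmetric polytope obtained by appending the polynomially many equalities $y_p=y_q$, $y_{p,e}=y_{q,e}$, and $y_{p,t}=y_{q,t}$; since an optimal player-symmetric solution always exists, this restriction leaves the optimal value unchanged while collapsing the exponential family~\eqref{d:1}.

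Next I would assemble the separation oracle for a player-symmetric vector $y$. Constraints~\eqref{d:2},~\eqref{d:3},~and~\eqref{d:4}, the symmetry equalities, and the sign constraints are polynomially many and can be checked one by one. The only difficulty lies in the exponentially many Constraints~\eqref{d:1}. Here I would invoke the reduction of their separation problem to Problem~\eqref{Sep}, then use Lemma~\ref{lem:separation} to rewrite it as $\min_{\theta \in \Theta}\chi(\theta)$ with each $\chi(\theta)$ equal to the optimum of the min-cost flow Problem~\eqref{Minflow}, and finally Lemma~\ref{lem:min_cost_flow} to evaluate each $\chi(\theta)$ in polynomial time along with an optimal integer congestion vector $q$. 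Comparing this minimum with the right-hand side of~\eqref{d:1} decides feasibility; if it is strictly smaller, I would decompose the integral $s$--$t$ flow of value $n$ into $n$ paths, assign one to each player to obtain an action profile $a \in A$ realizing $q$ (any such $a$ is equivalent by symmetry), and output the corresponding constraint~\eqref{d:1} as a separating hyperplane. This gives a polynomial-time oracle, so the ellipsoid method solves~\eqref{LP:2} in polynomial time.

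Finally, to recover the scheme itself rather than only its value, I would use the standard consequence of the ellipsoid run: throughout its execution the oracle returns only polynomially many constraints of~\eqref{LP:2}, each associated with a single primal variable $\phi_{\theta,a}$ of~\eqref{LP:1}. Restricting~\eqref{LP:1} to this polynomial set of variables (fixing all remaining $\phi_{\theta,a}$ to zero) yields an LP of polynomial size whose optimum matches that of~\eqref{LP:1}; solving it directly produces an optimal $\phi$ supported on polynomially many action profiles. I expect this last recovery step to be the main obstacle: the ellipsoid method certifies the optimal dual value readily, and the care is in arguing that the polynomially many constraints it encounters are sufficient to reconstruct a genuinely optimal, sparse primal signaling scheme rather than merely the optimal social cost.
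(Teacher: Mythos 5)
Your proposal is correct and follows essentially the same route as the paper: apply the ellipsoid method to the dual~\eqref{LP:2}, use Lemma~\ref{lem:player_symm} to work with player-symmetric dual vectors, implement the separation oracle for Constraints~\eqref{d:1} via Lemmas~\ref{lem:separation} and~\ref{lem:min_cost_flow} (one integer min-cost flow per $\theta \in \Theta$), and finally recover an optimal $\phi$ from the polynomially many constraints generated during the run by solving the primal~\eqref{LP:1} restricted to the corresponding variables. If anything, you are more explicit than the paper on two points it leaves terse: enforcing symmetry by adding the polynomially many equalities $y_p = y_q$, $y_{p,e} = y_{q,e}$, $y_{p,t} = y_{q,t}$ to the dual (rather than symmetrizing query points on the fly), and the flow-decomposition step turning an integer congestion vector $q$ into an action profile $a \in A$ that identifies the violated constraint.
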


\begin{proof}
	The algorithm applies the ellipsoid algorithm to Problem~\eqref{LP:2}.
	At each iteration, we require that the vector of dual variables $y$ given to the separation oracle be player-symmetric, which can be easily obtained by applying the symmetrization technique introduced in the proof of Lemma~\ref{lem:player_symm}.
	%
	The separation oracle needs to solve an instance of integer min-cost flow problem for every $\theta \in \Theta$ (see Lemmas~\ref{lem:separation}~and~\ref{lem:min_cost_flow}).
	Notice that an integer solution is required in order to be able to identify a violated constraint.
	Finally, the polynomially many violated constraints generated by the ellipsoid algorithm can be used to compute an optimal $\phi$.
\end{proof}



\section{The Curse of Asymmetry}\label{sec:hard}

In this section, we provide our hardness result on asymmetric BNCGs.
Our proof is split into two intermediate steps: \emph{(i)} we prove the hardness for a simple class of asymmetric non-Bayesian congestion games in which each player selects only one resource (Lemma~\ref{lm:hard}), and \emph{(ii)} we show that such games can be represented as NCGs with only a polynomial blow-up in the representation size (Lemma~\ref{lm:2}).
Our main result reads:

\begin{theorem}\label{thm:hardness}
	The problem of computing an optimal \emph{ex ante} persuasive signaling scheme in BNCGs with asymmetric players is \NPHard, even with affine costs.~\footnote{Without affine costs, computing an optimal \emph{ex ante} persuasive signaling scheme is trivially \NPHard~even in symmetric BNCGs. This directly follows from~\cite{meyers2012complexity}, which show that even finding an optimal action profile (that is also an optimal Nash equilibrium) is \NPHard~in symmetric (non-Bayesian) NCGs.}
\end{theorem}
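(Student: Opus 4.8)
The plan is to prove the problem is \NPHard\ by exhibiting a polynomial reduction that lands in the simplest possible asymmetric setting and then lifting it to BNCGs for free. The key observation is that a BNCG with a single state of nature ($|\Theta|=1$) is just an NCG, and that in this degenerate case an \emph{ex ante} persuasive signaling scheme coincides exactly with a coarse correlated equilibrium, so that an \emph{optimal} scheme is an optimal (social-cost-minimizing) CCE. Hence it suffices to prove that computing an optimal CCE is \NPHard\ in some class of asymmetric congestion games with affine costs. I would target the weakest such class, namely \emph{singleton} games in which each player selects a single resource from a player-specific feasible set, proving hardness there (Lemma~\ref{lm:hard}) and then showing that any such game is realizable as a genuine NCG with only polynomial overhead (Lemma~\ref{lm:2}); Theorem~\ref{thm:hardness} then follows by chaining the two lemmas with the $|\Theta|=1$ correspondence.

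For Lemma~\ref{lm:hard} I would reduce from a standard \NP-complete graph/constraint-satisfaction problem (e.g.\ \textsc{Max-Cut} or \textsc{Independent-Set}) to the decision version ``is there an \emph{ex ante} persuasive scheme of social cost at most $K$?''. The subtle point, which dictates the whole construction, is that the hardness \emph{cannot} come from the social optimum: minimizing $\sum_{e} \alpha_e q_e^2 + \beta_e q_e$ over feasible assignments is a convex-cost assignment problem, solvable in polynomial time by min-cost flow exactly as in the positive result of Section~\ref{sec:poly}. Therefore the combinatorial difficulty must be injected through the persuasiveness constraints themselves. I would build one small asymmetric gadget per element of the source instance, using players with two-resource feasible sets and carefully tuned affine coefficients $\alpha_{e},\beta_{e}$, so that the only way to drive the expected social cost below $K$ while keeping every deviation unprofitable in expectation is to correlate the players' recommendations in a pattern encoding a feasible (e.g.\ large-cut) solution; conversely any such solution yields a persuasive scheme meeting the bound.

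For Lemma~\ref{lm:2} I would turn each abstract resource into a single cost-bearing edge and give every player a private source and sink connected, via zero-cost connector edges, to the entry and exit nodes of exactly the resource-edges in that player's feasible set. Laying the connectors out in parallel guarantees that every $s_p$--$t_p$ path traverses exactly one resource-edge, so the NCG action set of player $p$ is in bijection with her feasible resources and the induced edge congestions and costs match the original singleton game; the number of added nodes and edges is linear in the number of (player, resource) incidences, hence polynomial.

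The main obstacle I anticipate is the gadget engineering in Lemma~\ref{lm:hard}: because the bare social-optimum subproblem is already easy, the reduction has to make the \emph{equilibrium} constraints bind and carry all the combinatorial content, and one must verify both directions of the reduction while checking the polynomially many per-player deviation inequalities and keeping all costs affine with a polynomially bounded threshold $K$. By comparison, the lifting in Lemma~\ref{lm:2} and the final assembly of Theorem~\ref{thm:hardness} are routine once the singleton hardness and the $|\Theta|=1$ correspondence are in hand.
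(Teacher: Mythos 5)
Your high-level plan coincides exactly with the paper's: reduce to the non-Bayesian case via the $|\Theta|=1$ correspondence (optimal \emph{ex ante} persuasive scheme $=$ optimal CCE), prove hardness of optimal-CCE computation in asymmetric singleton congestion games with affine costs (Lemma~\ref{lm:hard}), and lift singleton games to genuine NCGs with polynomial blow-up (Lemma~\ref{lm:2}); your construction for the lifting lemma is essentially identical to the paper's. You also correctly identify the central subtlety, which the paper flags in a footnote: since a social-cost-minimizing \emph{action profile} in asymmetric SCGs is computable in polynomial time by min-cost flow, the hardness cannot sit in the optimization over pure profiles and must instead be forced through the persuasiveness (incentive) constraints, which is why the reduction must produce instances where the optimal CCE genuinely randomizes.

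However, there is a genuine gap: the core of the theorem, namely the actual reduction establishing Lemma~\ref{lm:hard}, is never constructed. You state that you ``would build one small asymmetric gadget per element'' of a \textsc{Max-Cut} or \textsc{Independent-Set} instance with ``carefully tuned affine coefficients,'' but no gadget, no cost coefficients, no threshold $K$, and no verification of either direction of the reduction are given --- and you yourself concede this is ``the main obstacle.'' This is precisely where all the work lies. The paper reduces from 3SAT and the construction is delicate: it uses three families of players (variable players, clause players, and $\Theta(m^{12})$ padding players per literal plus $\Theta(m^{30})$ players pinned to a sink resource), parameters $z=m^{30}$, $u=m^{12}$, $\epsilon=1/m^4$ chosen so that the incentive inequalities separate satisfiable from unsatisfiable instances, an explicit CCE supported on exactly three correlated action profiles with probabilities $\frac12-\frac1{2m^{10}}$, $\frac12-\frac1{2m^{10}}$, $\frac1{m^{10}}$ for the ``if'' direction, and a union-bound argument extracting a satisfying assignment from any low-cost CCE for the ``only if'' direction. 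Nothing in your proposal certifies that an analogous construction from \textsc{Max-Cut} or \textsc{Independent-Set} exists, and the feasibility of such a reduction is exactly what the theorem asserts; asserting that the ``equilibrium constraints carry the combinatorial content'' is a proof obligation, not a proof. As it stands, your proposal is a correct architectural outline of the paper's argument with its load-bearing component missing.
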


The proof of Theorem~\ref{thm:hardness} is based on a reduction which maps an instance of 3SAT (a well-known \NPHard~problem, see~\citep{garey1979computers}) to a game in the class of \emph{singleton congestion games} (SCGs)~\citep{ieong2005fast}.
%
%
A (non-Bayesian) SCG is described by a tuple $(N,R,\{A_p\}_{p\in N},\{c_r\}_{r\in R})$, where $R$ is a finite set of resources, each player $p \in N$ selects a single resource from the set $A_p \subseteq R$ of available resources, and resource $r \in R$ has a cost $c_r : \mathbb{N} \to \mathbb{R}_+$.
%
Another way of interpreting SCGs is as games played on parallel-link graphs, where each player can select only a subset of the edges.

First, let us provide the following definition and notation.

\begin{definition}[3SAT]
	Given a finite set $C$ of three-literal clauses defined over a finite set $V$ of variables, is there a truth assignment to the variables satisfying all the clauses?
\end{definition}

We denote with $l \in \varphi$ a literal (\emph{i.e.}, a variable or its negation) appearing in a clause $\varphi \in C$. 
Moreover, we let $m$ and $s$ be, respectively, the number of clauses and variables, \emph{i.e.}, $m \coloneqq |C|$ and $s \coloneqq |V|$. W.l.o.g., we assume that $m \ge s$.

Lemma~\ref{lm:hard} introduces our main reduction, proving that finding a social-cost-minimizing CCE is \NPHard~in SCGs with asymmetric players, \emph{i.e.}, whenever the resource sets $A_p$ are different among each other.~\footnote{The reduction in Lemma~\ref{lm:hard} does \emph{not} rely on standard constructions, as most of the reductions for congestion games only work with action profiles, while ours needs randomization. Indeed, in asymmetric SCGs, a social-cost-minimizing action profile can be computed in poly-time by solving a suitable instance of min-cost flow. This also prevents the use of other techniques for proving the hardness of CCEs, such as, \emph{e.g.}, those by~\cite{barman15cce}.}
Notice that the games used in the reduction are \emph{not} Bayesian; this shows that the hardness fundamentally resides in the asymmetry of the players.

\begin{lemma}\label{lm:hard}
	The problem of computing a social-cost-minimizing CCE in SCGs with asymmetric players is \NPHard, even with affine costs.
\end{lemma}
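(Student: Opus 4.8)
The plan is to reduce from 3SAT. Given a 3SAT instance with clause set $C$ and variable set $V$, I would build an asymmetric singleton congestion game (SCG) together with a target social-cost threshold $\tau$, so that a social-cost-minimizing CCE achieves cost at most $\tau$ if and only if the formula is satisfiable. The key conceptual difficulty, flagged already in the footnote of the excerpt, is that a social-cost-minimizing \emph{action profile} in an asymmetric SCG is computable in polynomial time via min-cost flow; hence the hardness must be an artifact of \emph{randomization}. So the reduction cannot simply encode a satisfying assignment as a single profile: it must force the optimal CCE to mix, and the persuasiveness (coarse-correlated) constraints must be what makes a low-cost distribution achievable exactly when the formula is satisfiable.

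First I would set up the gadgets. I plan to introduce one set of ``variable players'' whose only available resources correspond to the two truth values of each variable $v \in V$, so that any profile induces (possibly fractional, once we mix) a truth assignment; the singleton/asymmetric structure is what lets me give different players access to different resources, encoding which variable each player controls. I would then add ``clause players'' and ``penalty resources'' with carefully chosen affine costs $c_r(k) = \alpha_r k + \beta_r$ so that a clause player incurs an extra cost precisely when none of the three literals of its clause is set to true by the variable players' choices. The CCE incentive constraints (the $|\Theta|=1$ specialization of \emph{ex ante} persuasiveness from the Definition) would be engineered so that no variable player can profitably deviate away from the mixture representing a consistent assignment; this is where the affine-cost congestion externalities do the work, making deviations costly through the induced congestion.

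The main technical steps, in order, are: (i) specify the resource set $R$, the per-player action sets $A_p \subseteq R$, and the affine costs $\{c_r\}$; (ii) define the threshold $\tau$; (iii) prove the forward direction, that a satisfying assignment yields a CCE (a distribution $\phi \in \Delta_A$ over profiles) of social cost at most $\tau$ satisfying all CCE constraints; and (iv) prove the reverse direction, that any CCE of cost at most $\tau$ must place its mass on profiles that jointly encode a satisfying assignment, whence the formula is satisfiable. Step (iv) is where I would extract, from the marginals of the equilibrium distribution, a consistent $0/1$ assignment — so the argument must guarantee that the optimal CCE is essentially ``rounded'' to coherent assignments rather than hedging fractionally across contradictory literals.

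I expect step (iv), the soundness direction, to be the principal obstacle, for two reasons. The reduction must rule out that the sender exploits randomization to undercut $\tau$ \emph{without} corresponding to any genuine satisfying assignment — precisely the phenomenon that distinguishes CCEs from pure profiles and that the footnote warns makes standard congestion-game reductions inapplicable. Thus the costs and the persuasiveness constraints must be tuned so that the only way the CCE constraints can be met at cost $\le \tau$ is by concentrating on truth-consistent profiles. A secondary subtlety is verifying that the CCE constraints themselves are satisfiable in the ``yes'' case, i.e.\ that the intended low-cost distribution is genuinely coarse correlated (no beneficial unilateral deviation in expectation); balancing the deviation incentives against the social-cost threshold is the delicate part of the construction, and I would likely need an auxiliary ``balancing'' gadget whose sole purpose is to absorb deviation incentives while contributing a controlled, satisfiability-independent amount to the social cost.
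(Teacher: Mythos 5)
Your proposal correctly identifies the skeleton of the argument---a 3SAT reduction, the fact that randomization must be essential (since an optimal pure profile is computable via min-cost flow), and the two directions to prove---but it is a plan rather than a proof: every load-bearing component is left as a placeholder. Concretely, what is missing is the gadget construction itself, and it is not routine. Besides variable players $p_v$ (with actions $\{r_v, r_{\bar v}, r_t\}$) and clause players $p_{\varphi,q}$, the paper's reduction requires two further ingredients that your sketch does not anticipate: \emph{(a)} a huge population of $z = m^{30}$ dummy players pinned to a sink resource $r_t$, whose affine cost turns $r_t$ into a precisely calibrated outside option for the variable players; and \emph{(b)} for each literal $l$, a block of $2u = 2m^{12}$ ``filler'' players with access to $r_l$ and two private resources $r_{l,1}, r_{l,2}$. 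The intended low-cost CCE mixes three profiles: two (each with probability $\tfrac{1}{2}-\tfrac{1}{2m^{10}}$) in which fillers swap between their private resources, and a rare one (probability $\tfrac{1}{m^{10}}$) in which the fillers of false literals flood $r_l$. This rare flooding is exactly the mechanism that makes deviations onto false-literal resources costly \emph{in expectation} while contributing negligibly to the social cost---i.e., it is the concrete realization of your hypothesized ``balancing gadget,'' and without it neither the persuasiveness constraints in the ``yes'' direction nor the threshold separation can be established. A further non-obvious encoding choice: the variable player $p_v$ occupies the resource of the \emph{false} literal, so that clause players of satisfied clauses can sit alone on true-literal resources (their placement is obtained from a pure Nash equilibrium of a restricted game, which exists by Rosenthal's theorem).

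Your step (iv) also rests on an idea that would fail as stated: extracting the assignment ``from the marginals of the equilibrium distribution.'' Marginals of a low-cost CCE can perfectly well hedge fractionally between $r_v$ and $r_{\bar v}$---this is the very obstacle you flag---and nothing in your sketch resolves it. The paper's soundness argument avoids marginals entirely: it first shows that in any CCE of cost at most the threshold $\gamma$, each variable player $p_v$ is alone on $r_v$ or $r_{\bar v}$ with probability at least $1-\tfrac{1}{m^8}-\tfrac{1}{m^2}$ (using the calibrated cost of deviating to $r_t$), and then applies a union bound over the $s$ variables to exhibit a \emph{single action profile in the support} in which all variable players are simultaneously alone on their resources. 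The truth assignment is read off from that one profile; since in it the clause players necessarily occupy unblocked (true-literal) resources, every clause is satisfied. Some argument of this single-profile type is indispensable, and supplying it---together with the quantitative choices $z = m^{30}$, $u = m^{12}$, $\epsilon = m^{-4}$ that make all the inequalities go through---constitutes essentially the entire content of the lemma, which your proposal defers rather than provides.
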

\begin{proof}
	Our 3SAT reduction shows that the existence of a polynomial-time algorithm for computing a social-cost-minimizing CCE in SCGs would allow us to solve any 3SAT instance in polynomial time.
	Given $(C,V)$, let $z \coloneqq m^{30}$, $u \coloneqq m^{12}$, and $\epsilon \coloneqq \frac{1}{m^4}$. 
	We build an SCG $\Gamma(C,V)$ admitting a CCE with social cost smaller than or equal to  $\gamma \coloneqq z^2+ (4us+s+3m)(z-u) + \frac{3z}{m^{9}}$ iff $(C,V)$ is satisfiable.
	
	\textbf{Mapping.}
	$\Gamma(C,V)$ is defined as follows (for every $r \in R$, the cost $c_r$ is an affine function with coefficients $\alpha_r$ and $\beta_r$).
	\begin{itemize}[nolistsep,itemsep=1mm]
		\item $N= \{ p_{v} \mid v  \in V\} \cup \{ p_{\varphi,q} \mid \varphi \in C, q \in [3]\} \cup \{p_{v,j}, p_{\bar v,j} \mid v \in V, j \in [2u]\} \cup \{ p_{i} \mid i  \in [z]\}$;
		\item $R = \{ r_t\} \cup \{ r_v, r_{\bar v}, r_{v,1}, r_{v,2}, r_{\bar v,1},r_{\bar v,2} \mid v \in V\}$;
		\item $A_{p_v} = \{r_v, r_{\bar v}, r_{t} \}\ \ \forall v \in V$;
		\item $A_{p_{\varphi,q}} = \{ r_l \mid l \in \varphi\}  \ \ \forall\ \varphi \in C, \forall q \in [3]$;
		\item $A_{p_{v,j}}= \{ r_v, r_{v,1},r_{ v,2} \}\ \ \forall v \in V, \forall j \in [2u]$;
		\item $A_{p_{\bar v,j}}= \{ r_{\bar v}, r_{\bar v,1},r_{\bar v,2} \} \ \ \forall v \in V, \forall j \in [2u]$;
		\item $A_{p_i} = \{ r_t \} \ \ \forall i \in [z]$;
		\item $\alpha_{r_v}=\alpha_{r_{\bar v}}=\epsilon$ and $\beta_{r_v}=\beta_{r_{\bar v}}=z+1-\epsilon\ \ \forall v \in V$;
		\item $\alpha_{r_{v,1}}=\alpha_{r_{v,2}}=\alpha_{r_{\bar v,1}}=\alpha_{r_{\bar v,2}}=1\ \ \forall v \in V$;
		\item $\beta_{r_{v,1}}= \beta_{r_{v,2}}=\beta_{r_{\bar v,1}}=\beta_{r_{\bar v,2}}=z+1-u\ \ \forall\ v \in V$;
		\item $\alpha_{r_t}=1$ and $\beta_{r_t}=0$;
	\end{itemize}

	
	%

	\textbf{If.} 
	Suppose $(C,V)$ is satisfiable, and let $\tau : V \to \{\mathsf T,\mathsf F\}$ be a truth assignment satisfying all the clauses in $C$. 
	Using $\tau$, we recover a CCE $\phi \in \Delta_A$ with social cost smaller than or equal to $\gamma$, having in its support the action profiles $a^{k} =(a_p^k)_{p\in N} \in \bigtimes_{p \in N} A_p$ for $k \in [3]$, defined as follows.
	%
	%
	First, let us consider a congestion game $\Gamma_\textnormal{R}$ restricted to the players in $\{ p_{\varphi,q} \mid \varphi \in C, q \in [3] \}$, with action spaces limited to the resources $r_l \in A_{p_{\varphi,q}}$ with $\tau(l)= \mathsf T$ (since $\tau$ satisfies all clauses, each player has at least one action).
	Clearly, $\Gamma_\textnormal{R}$ admits a pure NE~\citep{rosenthal1973class}.
	Moreover, we show that, in any pure NE, each resource is selected by at least one player.
	By contradiction, suppose that there exists a resource $r_l$ such that no player $p_{\varphi,q}$ chooses it.
	Then, there must be at least two players $p_{\varphi,q}$ (with $l \in \varphi$) selecting a resource different from $r_l$.
	It is easy to check that, then, there must be one player with an incentive to deviate, contradicting the NE assumption.
	%
	%
	For every $\varphi \in C$ and $ q \in [3]$, we let $a^1_{p_{\varphi,q}}$ (respectively, $a^2_{p_{\varphi,q}}$ and $a^3_{p_{\varphi,q}}$) be equal to the resource played by the corresponding player in a pure NE of $\Gamma_\textnormal{R}$.
	For every literal $l \in \{v, \bar v \mid v \in V  \}$ and $j \in [u]$, we let $a^1_{p_{l,j}}= r_{l,1}$, $a^2_{p_{l,j}}= r_{l,2}$, and $a^3_{p_{l,j}}= r_{l}$ if $\tau(l)= \mathsf F$, while $a^3_{p_{l,j}}= r_{l,1}$ if $\tau(l)= \mathsf T$.
	Similarly, for $j \in [2u]: j > u$, we let $a^1_{p_{l,j}}= r_{ l,2}$, $a^2_{p_{l,j}}= r_{l,1}$, and $a^3_{p_{l,j}}= r_{l}$ if $\tau(l)= \mathsf F$, while $a^3_{p_{l,j}}= r_{l,2}$ otherwise.
	For every $v \in V$, we let $a^3_{p_v}=r_t$ and $a^1_{p_v} =a^2_{p_v}=r_{ v}$ if $\tau(v)= \mathsf F$, while $a^1_{p_v} =a^2_{p_v}=r_{ \bar v}$ if not.
	Finally, we let $\phi_{a^1} = \phi_{a^2} = \frac{1}{2}-\frac{1}{2m^{10}}$, while $\phi_{a^3} = \frac{1}{m^{10}}$.
	We show that players have no incentive to defect from $\phi$.
	Given that player $p_{\varphi, q}$'s action (for $\varphi \in C$ and $q \in [3]$) is determined by a pure NE of $\Gamma_\textnormal{R}$, she does not have any incentive to select a resource $r_l \in A_{\varphi, q}$ with $\tau(l)=\mathsf T$ (as it is not selected by other players).
	Moreover, in $\phi$, player $p_{\varphi,q}$'s expected cost is at most $z+1+3 \epsilon m$, while she would pay at least $(z+1+\epsilon) (1-\frac{1}{m^{10}})+ (z+1+2u\epsilon) \frac{1}{m^{10}}=z+1+2 \epsilon m^2$ by selecting a resource $r_l \in A_{\varphi, q}$ with $\tau(l)= \mathsf F$.
	Each player $p_v$ (for $v \in V$) does not defect from $\phi$, since her expected cost is $(z+1)(1-\frac{1}{m^{10}})+ (z+s) \frac{1}{m^{10}}$, while she would pay: \emph{(i)} the same by switching to $r_t$, \emph{(ii)} at least $(z+1+\epsilon)$ by playing $r_l$ with $l \in \{v,\bar v\}$ and $\tau(l)=\mathsf T$ (as there is at least one player $p_{\varphi,q}$ on $r_l$), or \emph{(iii)} at least $(z+1)(1-\frac{1}{m^{10}})+(z+1+2u \epsilon)\frac{1}{m^{10}}= z+1+2 \frac{1}{m^{2}}$ by selecting $r_l$ with $l \in \{v,\bar v\}$ and $\tau(l)=\mathsf F$.
	%
	%
	%
	%
	Each player $p_{l,j}$ (for $l \in \{ v, \bar v \mid v \in V \}$ and $j \in [2u]$) with $\tau(l)=\mathsf{F}$ does not deviate, since her cost is $(z+1)(1-\frac{1}{m^{10}})+ (z+1 -\epsilon +2u \epsilon) \frac{1}{m^{10}}$, while she would pay: \emph{(i)} at least $(z+1)(\frac{1}{2}-\frac{1}{2 m^{10}})+(z+2) (\frac{1}{2}-\frac{1}{2 m^{10}})$ by switching to either $r_{l,1}$ or $r_{l,2}$, or \emph{(ii)} at least $(z+1+\epsilon)(1-\frac{1}{m^{10}})+(z+1-\epsilon+2u \epsilon)\frac{1}{m^{10}}$ by selecting resource $r_l \in A_{p_{l,j}}$.  
	%
	%
	Moreover, each player $p_{l,j}$ with $\tau(l)=\mathsf{T}$ does not deviate either, as her cost is $(z+1)$, while she would pay: \emph{(i)} at least $(z+1+\epsilon)$ by playing $r_l$, or \emph{(ii)} at least $(z+1)(\frac{1}{2}+\frac{1}{2 m^{10}})+(z+2) (\frac{1}{2}-\frac{1}{2 m^{10}})$ by switching to either $r_{l,1}$ or $r_{l,2}$.
	Finally, the CCE provides a social cost smaller than or equal to  $(z^2+(4us+s+3m)(z+u))(1-\frac{1}{m^{10}})+((z+s)^2+(4us+m)(z+u)) \frac{1}{m^{10}}\le \gamma$, where the last inequality comes from  $2u(4us+s+3m)+\frac{s^2}{m^{10}}\le \frac{z}{m^{9}}$.

	\textbf{Only if.}
	Suppose there exists a CCE $\phi \in \Delta_A$ with social cost smaller than or equal to $\gamma$.
	First, we prove that, with probability at most $\frac{1}{m^8}$, at least one player $p_v$ plays $r_t$.
	By contradiction, assume that this is not the case.
	Then, the social cost would be at least $(z^2+(4us+s+3m)(z-u))(1-\frac{1}{m^8}) + ((z+1)^2+ (4us+s+3m-1)(z-u)) \frac{1}{m^{8}} \ge z^2+(4us+s+3m) (z-u)+  (2z-z) \frac{1}{m^{8}} > \gamma$.
	This implies that each player $p_v$ is playing either $r_v$ or $r_{\bar v}$ with probability at least $1-\frac{1}{m^8}$.
	We prove that $p_v$ is the only player on that resource with probability at least $1-\frac{1}{m^8}-\frac{1}{m^2}$.
	Otherwise, by contradiction, her cost would be at least $z+1+  \frac{\epsilon}{m^2}=z+1+ \frac{1}{m^6}$, while by playing $r_t$ she would pay at most $(z+1)(1-\frac{1}{m^8})+ (z+s)\frac{1}{m^8}\le z+1 + \frac{1}{m^7}$.
	By a union bound, there exists an action profile $a = (a_p)_{p \in N} \in \bigtimes_{p \in N} A_p$ played with probability at least $1- s ( \frac{1}{m^8}+\frac{1}{m^2}) >0$ in which all the players $p_v$ are alone on their resources (either $r_v$ or $r_{\bar v}$).
	Let $\tau: V \to \{ \mathsf{T}, \mathsf{F} \}$ be a truth assignment such that $\tau(v)=\mathsf T$ if $a_{p_v}=r_{\bar v}$ and $\tau(v)=\mathsf F$ if $a_{p_v}=r_{ v}$.
	Then, $\tau$ satisfies all the clauses, since all the players $p_{\varphi,q}$ play $r_l$ with $\tau(l) = \mathsf{T}$ and, thus, all the clauses have at least one true literal.
\end{proof}

The following lemma concludes the proof of Theorem~\ref{thm:hardness}.
%

\begin{lemma}\label{lm:2}
	Any SCG can be represented as an NCG of size polynomial in the size of the original SCG.
\end{lemma}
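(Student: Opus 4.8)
The plan is to give an explicit, faithful gadget that turns the resource structure of an SCG into the path structure of an NCG on a layered (hence acyclic) graph, so that each player's available paths are in bijection with her available resources and the costs match profile-by-profile. First I would build the graph $G=(V,E)$ as follows: introduce a single common destination $t$; a distinct source $s_p$ for every player $p\in N$, with $t_p\defeq t$; and one ``entry'' node $w_r$ for every resource $r\in R$. For each resource I add a \emph{resource edge} $e_r\defeq(w_r,t)$ whose cost function is exactly the SCG cost $c_r$, and for each player $p$ and each available resource $r\in A_p$ I add a zero-cost \emph{connector edge} $(s_p,w_r)$, i.e.\ with $\alpha=\beta=0$ (admissible since $c_e(0)=0$ is already assumed). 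The source-destination pair of player $p$ is $(s_p,t)$.

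The second step is to verify the bijection between actions. Since the outgoing edges of a source lead only to entry nodes, the only outgoing edge of each $w_r$ leads to $t$, and $t$ has no outgoing edges, the graph is acyclic and every directed $s_p$-$t$ path has the form $s_p\to w_r\to t$ for a unique $r\in A_p$. Hence the directed paths from $s_p$ to $t$ are in one-to-one correspondence with $A_p$, which is precisely the action set of player $p$ in the SCG. The third step is to check that costs are preserved under this correspondence: the congestion on $e_r$ in any action profile equals the number of players whose chosen path traverses $w_r$, i.e.\ the number of players selecting resource $r$, while connector edges are always free. Thus the cost of a player routing through $w_r$ equals $c_r$ evaluated at the number of players on $r$, matching the SCG cost exactly.

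Putting these together, the two games have identical player sets, bijective action spaces, and equal costs on corresponding profiles, so the bijection on action profiles lifts to a bijection on distributions $\phi$ that preserves both the \emph{ex ante} persuasiveness (CCE) constraints and the expected social cost; a social-cost-minimizing CCE of one game transfers to the other. Combined with Lemma~\ref{lm:hard}, and noting that a non-Bayesian NCG is the special case of a BNCG with $|\Theta|=1$, this completes the proof of Theorem~\ref{thm:hardness}. On size, the construction uses $1+|N|+|R|$ nodes and $|R|+\sum_{p}|A_p|\le |R|+|N|\,|R|$ edges, with costs copied verbatim, hence polynomial in the size of the SCG.

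The step I expect to be the main obstacle is not the counting but the \emph{faithfulness}: guaranteeing that no unintended paths appear (e.g.\ paths chaining several resource edges, or routes that revisit entry nodes) and that the asymmetry of the $A_p$'s is captured purely through which connectors exist, while the connectors remain cost- and congestion-neutral. The layered/acyclic design above is exactly what I would rely on to rule these out, and the bulk of the write-up would be spelling out that this structural guarantee makes the correspondence an isomorphism of games rather than a mere reduction of optimal values.
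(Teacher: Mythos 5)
Your construction is correct and is essentially the paper's own gadget: each resource becomes an edge carrying the original cost function $c_r$, zero-cost connector edges encode which resources each player may use, and player-specific sources capture the asymmetry, so paths are in bijection with resource choices and congestions/costs match profile-by-profile. The only immaterial differences are cosmetic: the paper uses two nodes $v_{r,1},v_{r,2}$ per resource (with the costly edge between them) and a separate destination $t_p$ for each player, whereas you place the costly edge into a single shared destination $t$.
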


\begin{proof}
	Given an SCG $(N,R,\{A_p\}_{p\in N},\{c_r\}_{r\in R})$ we build an NCG $(N,G,\{c_e\}_{e \in E},\{(s_p,t_p)\}_{p \in N})$, as follows.
	The graph $G=(V,E)$ has two nodes $v_{r,1}, v_{r,2} \in V$ for each resource $r \in R$, and, additionally, for every player $p \in N$, there is a source node $s_p \in V$ and a destination one $t_p \in V$.
	Moreover, there is an edge $(v_{r,1}, v_{r,2}) \in E$ for every $r \in R$ and, for every $p \in N$ and $r \in A_p$, there two edges $(s_p, v_{r,1}) \in E$ and $(v_{r,2},t_p) \in E$.
	Finally, for the edges $e = (v_{r,1}, v_{r,2})$, we let $c_{e}=c_r$, while $c_e=0$ for all the other edges. 
\end{proof}

\section{Discussion and Future Works}\label{sec:discussion}

The paper studies information-structure design problems in atomic BNCGs, where an informed sender can observe the actual state of the network and commit to a signaling scheme.
We focus on the problem of computing optimal {\em ex ante} persuasive signaling schemes in such setting.
We show that, with affine costs, {\em simmetry} is the property marking the transition from polynomial-time tractability to \textsf{NP}-hardness.

%

In the future, we are interested in studying the problem of approximating optimal {\em ex ante} persuasive signaling schemes, and in the design of practical algorithms for real-world network signaling problems.
Moreover, in order to make the framework even more applicable, it would be interesting to explore how the sender can handle uncertainty about receivers' payoffs, and to be robust to mismatching priors.

\bibliographystyle{named}
\bibliography{refs}

\end{document}